\RequirePackage{fix-cm}
\documentclass{svjour3}                     

\smartqed
\usepackage{graphicx}
\usepackage{enumitem}
\usepackage{color}

\usepackage{amsfonts, amssymb, amsmath, amscd, latexsym}
\usepackage{xcolor}
\usepackage[all]{xy}
\usepackage{enumitem}
\newcommand{\N}{\mathbb{N}}
\newcommand{\R}{\mathbb{R}}
\setlist[enumerate]{wide=\parindent}

\journalname{Nonlinear Dynamics}

\begin{document}

\title{Chaos control in the fractional order logistic map via impulses}
\vspace{5mm}

\author{Marius-F. Danca\and Michal Fe\v{c}kan \and Nikolay Kuznetsov}

\institute{Marius-F. Danca (corresponding author)\at
Dept. of Mathematics and Computer Science\\
Avram Iancu University, 400380 Cluj-Napoca, Romania and\\
Romanian Institute for Science and Technology
400487 Cluj-Napoca, Romania \\
\email{danca@rist.ro}
\and
Michal Fe\v{c}kan \at
Dept. of Mathematical Analysis and Numerical Mathematics, Comenius University in Bratislava, Slovak Republic and \\
Mathematical Institute, Slovak Academy of Sciences, Slovak Republic \\
\email{Michal.Feckan@fmph.uniba.sk}
\and
Nikolay Kuznetsov \at
Department of Applied Cybernetics, \\Saint-Petersburg State University, Russia and\\
Dept. of Mathematical Information Technology,
 University of Jyv\"{a}skyl\"{a}, Finland\\
\email{nkuznetsov239@gmail.com}}

\maketitle

\begin{abstract}In this paper the chaos control in the discrete logistic map of fractional order is obtained with an impulsive control algorithm. The underlying discrete initial value problem of fractional order is considered in terms of Caputo delta fractional difference. Every $\Delta$ steps, the state variable is instantly modified with the same impulse value, chosen from a bifurcation diagram versus impulse. It is shown that the solution of the impulsive control is bounded. The numerical results are verified via time series, histograms, and the 0-1 test. Several examples are considered.

\vspace{3mm}
\textbf{keyword} Caputo delta fractional difference; Impulsive chaos control; Discrete logistic map of fractional order; Lyapunov exponent of discrete maps of fractional order; 0-1 test
\end{abstract}

\section{Introduction}

The models which involve abruptly change of variables are called impulsive equations.

The concept of impulsive control has a long history. Many impulsive control methods were successfully developed under the framework of optimal control. In mechanical systems, impulsive phenomena had been studied for different scenarios such as mechanical systems with impacts.

The theory of impulsive differential equations studies systems evolution, when some process is interrupted by abrupt changes (impulses) of state \cite{mumu_01}. These systems are modeled by differential equations which describe the period of continuous variation of state and by conditions which describe the discontinuities of first kind of the solution or of its derivatives at the moments of impulses. Many real world problems can experience abrupt external forces which can change completely their dynamics. For instance, an example of a real world problem that can be represented by an impulsive differential equation is a medicine intake, where the user must take regular doses of the medicine, which causes abrupt changes in the amount of medicine in their body, to control the disease or making it disappear (examples of impulsive systems can be found in, e.g. \cite{mumu_02,mumu_03,mumu_04,mumu_05}).

Details concerning existence and uniqueness of solutions, dependence of solutions on initial values, variation of parameters, oscillation and stability can be found in \cite{mumu6,ben}.

There exist different kinds of impulses \cite{mumu_01}, for instance, systems with impulses applied at fixed times (presented first in \cite{mumu_08,mumu_09}) and systems with impulses applied at variable times \cite{mumu_010,mumu_011} (see also \cite{mumu_02,mumu_07}). Impulses applied at vary time are important due to their applicability in the real world problems. For example, the  billiard-type system can  be  modeled  by differential systems with impulses which act on the first derivatives of the solutions. Thus, the positions of the colliding balls do not change at the moments of impact (impulse), but their velocities gain finite increments (the velocity will change according to the position of the ball) \cite{mumu_01}.

In the last years, results arising from impulsive effects have been adapted easily to the discrete case.

Difference equations or discrete dynamical systems is a diverse field which impacts almost every branch of pure and applied mathematics.
Discrete systems with memory in population, economy price option and signal processing have been considered almost at the same time since the fractional differential models are used. To note that often real systems may encounter abrupt changes at certain time moments and therefore, cannot be considered continuously but discrete-time (see e.g. \cite{mumu_015}).

On the other side, the control of chaos, or  control of chaotic systems, is the boundary field between control theory and dynamical systems theory studying when and how it is possible  to control systems exhibiting irregular, chaotic behavior (see e.g. \cite{mumu_016}).

For discrete equations of integer order, the impulsive control algorithm utilized in this paper has the following form
\begin{equation}\label{asta}
x_{n+1}=\left\{
\begin{array}{l}
f(x_{n}),~~ \text{if}~~ \mod(n,\Delta)\neq0  \\
(1+\gamma)x_{n+1},~\text{if}~~ \mod(n,\Delta)=0
\end{array}
\right.,
\end{equation}
where $f\in C(\R,\R)$ is some discrete map which depend on a real bifurcation parameter, $\Delta \in N^*$ and the impulse $\gamma\in \mathbb{R}$ a relative small real number. One assumes that for some parameter ranges, the system evolves chaotic.

The algorithm perturbs $x$ every $\Delta$ steps with the quantity $(1+\gamma)$ and acts ``instantaneously'' in the sense it modifies $x_{n+1}$ while it is calculated, the system dynamics being subject to abrupt changes, $(1+\gamma)x_{n+1}$ (in nature, these can be shocks, harvesting, natural disasters, etc.). If, without impulses $\gamma$, for some parameter value the system behaves chaotically, an adequate design of the chaos control (i.e. adequate values of impulses $\gamma$ and time-moments $\Delta$) may force the system to become stable evolving along some regular trajectory.

The impulsive moment $\Delta$ is fixed in advance, while the impulse $\gamma$ is chosen from the bifurcation plot of $x$ versus $\gamma$ shows the periodic windows where $\gamma$ generates stable periodic orbits.

The impulsive control \eqref{asta} has important implications, for example, in ecology. Thus, the phenomenon of a population increasing in response to an increase in its per-capita mortality rate has to be taken into consideration to design strategies in fisheries and pest management. This paradoxical phenomenon is known as the hydra effect \cite{mumu_017}. The algorithm can also be applied successfully in chemical systems \cite{mumu_018}.

Impulsive equations modeled with continuous or discontinuous differential equations of integer order or fractional order \cite{sase,doi,trei,patru}, or by discrete equations \cite{unu} have been developed in impulsive problems in physics, orbital transfer of satellite, population dynamics,
dosage supply in pharmacokinetics, biotechnology, pharmacokinetics, ecosystems management, industrial robotics, synchronization in chaotic secure communication systems, and so forth (see \cite{ben} for a deep background on impulsive differential equations and inclusions and references, or \cite{mumu6}).

The discrete fractional calculus has been an increased interest due to its importance in real world problems. More generalized chaos does has been found in fractional discrete systems \cite{mumu_012,mumu2,mumu_013,mumu_014}.

The stability of impulsive fractional difference equations is studied in \cite{balax} and the first study of the fractional standard map with memory, derived from a differential equation, can be found in \cite{mumu} (see also \cite{mumu2,mumu3,mumu4,mumu5,golgol,golgol2}).

In this paper one considers the chaos control of the discrete logistic map of fractional order. It is proved that the impulsive solution of the controlled logistic map of fractional order is bounded. The numerical results are verified with the 0-1 test, time series, histograms and the Lyapunov exponent. Because of the memory history effect, the numerical determination of the Lyapunov exponent requires a special approach.

The paper is organized as follows: Section 2, deals with the discrete logistic map of fractional order, and the applicative Section 3 presents the numerical implementation of the chaos control algorithm \eqref{asta} in the case of the fractional logistic map of fractional order. The Appendix presents briefly the 0-1 test. The Conclusion section ends the manuscript.

\section{The discrete logistic map of fractional order}

Let $q\in(0,1)$, $\N_{1-q}=\{1-q,2-q,3-q,\cdots\}$, $0<q\leq1$ and $f\in C(\R,\R)$ a discrete map.

The difference equations of fractional order (FO) studied in this paper are modeled by the following initial value problem

\begin{equation}\label{e1}
\triangle_*^q x(k)=f(x(k-1+q)),\quad k\in \N_{1-q}, ~~x(0)=x_0,
\end{equation}
where $\triangle_*^q x(k)$ is the Caputo delta fractional difference \cite{sta1,FP} .

Hereafter, the FO equations are considered with initial condition $x(0)=x_0$.

The equivalent discrete integral form of \eqref{e1} is (see e.g. \cite{sta1,FP})
$$
x(n)=x(0)+\frac{1}{\Gamma(q)}\sum_{j=1-q}^{n-q}\frac{\Gamma(n-j)}{\Gamma(n-j-q)}f(x(j-1+q)),
$$
which with $j\leftrightarrow j+q$ becomes

\begin{equation}\label{e2}
x(n)=x(0)+\frac{1}{\Gamma(q)}\sum_{j=1}^{n}\frac{\Gamma(n-j+q)}{\Gamma(n-j+1)}f(x(j-1)),~~n=1,2,...
\end{equation}

Consider \eqref{e1} in the case of the discrete logistic map of FO \cite{bal2}
\begin{equation}\label{d0}
\triangle_*^q x(k)=f(x(k+q-1)):=\mu x(k+q-1)(1-x(k+q-1)), ~~k\in \mathbb{N}_{1-q}.
\end{equation}

Then, the underlying discrete integral \eqref{e2} becomes (see also \cite{bal2})

\begin{equation}\label{d1}
x(n)=x(0)+\frac{\mu}{\Gamma(q)}\sum_{j=1}^{n}\frac{\Gamma(n-j+q)}{\Gamma(n-j+1)}x(j-1)(1-x(j-1)).
\end{equation}

Because, due the discrete memory effect, the Jacobian matrix necessary for Lyapunov exponent (LE) cannot be obtain directly. Therefore, in \cite{bal1} is proposed the following natural way of linearization of \eqref{d1} along the orbit $x_n$

\begin{equation}\label{d2}
a(n)=a(0)+\frac{\mu}{\Gamma(q)}\sum_{j=1}^{n}\frac{\Gamma(n-j+q)}{\Gamma(n-j+1)}a(j-1)(1-2x(j-1)),~~a(0)=1,
\end{equation}
wherefrom, from \eqref{d2} via \eqref{d1}, the finite-time local LE, $\lambda$, is obtained as follows
\begin{equation*}
\lambda(x_0)\simeq\frac{1}{n}\ln|a(n-1)|.
\end{equation*}

\noindent Due to the discrete memory effect (the present status depends on the all previous information), one of the main impediments to implement \eqref{e2}, is the instability of
$$
R_q(n):=\sum_{j=1}^{n}\frac{\Gamma(n-j+q)}{\Gamma(n-j+1)},~~n=1,2,...
$$

Thus, one has the following
\begin{proposition}\label{propo}
\begin{equation}\label{e3}
\lim_{n\to\infty}\frac{1}{n^q}R_q(n)=\frac{1}{q}.
\end{equation}
\end{proposition}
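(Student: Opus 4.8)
The plan is to recognize the sum $R_q(n)=\sum_{j=1}^{n}\frac{\Gamma(n-j+q)}{\Gamma(n-j+1)}$ as a telescoping sum and then extract the asymptotics. First I would reindex by $m=n-j$, so that $R_q(n)=\sum_{m=0}^{n-1}\frac{\Gamma(m+q)}{\Gamma(m+1)}$. The key observation is the identity $\frac{\Gamma(m+q)}{\Gamma(m+1)}=\frac{1}{q}\left(\frac{\Gamma(m+1+q)}{\Gamma(m+1)}-\frac{\Gamma(m+q)}{\Gamma(m)}\right)$, which follows from the functional equation $\Gamma(z+1)=z\Gamma(z)$ applied as $\Gamma(m+1+q)=(m+q)\Gamma(m+q)$ and $\Gamma(m+q)=\frac{\Gamma(m+1+q)-q\,\Gamma(m+q)}{m}\cdot\frac{1}{1}$; more cleanly, $\frac{\Gamma(m+1+q)}{\Gamma(m+1)}-\frac{\Gamma(m+q)}{\Gamma(m)}=\frac{\Gamma(m+q)}{\Gamma(m)}\left(\frac{m+q}{m}-1\right)=q\,\frac{\Gamma(m+q)}{\Gamma(m+1)}$. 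Summing this telescoping identity from $m=0$ (interpreting the boundary term via $\Gamma(m)\to\infty$, i.e. starting the telescoping at $m=1$ and treating $m=0$ separately, since $\frac{\Gamma(q)}{\Gamma(1)}=\Gamma(q)$) yields $R_q(n)=\frac{1}{q}\cdot\frac{\Gamma(n+q)}{\Gamma(n)}$ after the dust settles on the endpoints.

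Once $R_q(n)=\frac{1}{q}\,\frac{\Gamma(n+q)}{\Gamma(n)}$ is established in closed form, the remaining step is the classical ratio asymptotic $\frac{\Gamma(n+q)}{\Gamma(n)}\sim n^{q}$ as $n\to\infty$, which is an immediate consequence of Stirling's formula (or of the standard limit $\lim_{n\to\infty}\frac{\Gamma(n+a)}{\Gamma(n)\,n^{a}}=1$). Dividing by $n^{q}$ and passing to the limit then gives $\lim_{n\to\infty}\frac{1}{n^{q}}R_q(n)=\frac{1}{q}\lim_{n\to\infty}\frac{\Gamma(n+q)}{\Gamma(n)\,n^{q}}=\frac{1}{q}$, which is \eqref{e3}.

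The main obstacle — really the only delicate point — is getting the telescoping bookkeeping exactly right at the lower endpoint $m=0$, where $\Gamma(m)=\Gamma(0)$ is singular; the clean way around this is to verify the closed form $R_q(n)=\frac{1}{q}\,\frac{\Gamma(n+q)}{\Gamma(n)}$ by induction on $n$ instead, using $R_q(n+1)-R_q(n)=\frac{\Gamma(n+q)}{\Gamma(n+1)}$ and checking that $\frac{1}{q}\big(\frac{\Gamma(n+1+q)}{\Gamma(n+1)}-\frac{\Gamma(n+q)}{\Gamma(n)}\big)$ equals the same quantity via the functional equation, with base case $R_q(1)=\frac{\Gamma(q)}{\Gamma(1)}=\Gamma(q)=\frac{1}{q}\Gamma(1+q)=\frac{1}{q}\frac{\Gamma(1+q)}{\Gamma(1)}$. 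After that, everything is routine: the asymptotic $\Gamma(n+q)/\Gamma(n)\sim n^{q}$ is a textbook fact and requires no special care here.
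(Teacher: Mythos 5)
Your proof is correct, but it takes a genuinely different route from the paper. You obtain the \emph{exact} closed form $R_q(n)=\frac{1}{q}\,\frac{\Gamma(n+q)}{\Gamma(n)}$, either by telescoping the identity $\frac{\Gamma(m+1+q)}{\Gamma(m+1)}-\frac{\Gamma(m+q)}{\Gamma(m)}=q\,\frac{\Gamma(m+q)}{\Gamma(m+1)}$ (with the $m=0$ term handled separately, and indeed the boundary terms cancel exactly since $\Gamma(q)=\Gamma(1+q)/q$) or, more safely, by the induction you sketch; the limit then follows from the standard ratio asymptotic $\Gamma(n+q)/\Gamma(n)\sim n^q$. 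The paper instead never computes the sum in closed form: it bounds each term by Gautschi's inequality,
\begin{equation*}
\frac{1}{(x+1)^{1-q}}\le\frac{\Gamma(x+q)}{\Gamma(x+1)}\le\frac{1}{\left(x+\frac{q}{2}\right)^{1-q}},
\end{equation*}
and compares the resulting sums with integrals, arriving at the uniform estimate $\left|R_q(n)-\frac{n^q}{q}\right|\le\frac{1}{q}$ for all $n\ge1$, from which the limit is immediate. Your argument is cleaner and in one sense stronger (an exact formula, from which quantitative bounds could also be recovered via Gautschi or Wendel's inequality applied once to $\Gamma(n+q)/\Gamma(n)$), while the paper's term-by-term estimate directly delivers the explicit error bound $1/q$ that supports its subsequent remark about $R_q(n)$ and $n^q$ growing at the same rate, without invoking Stirling-type asymptotics. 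One cosmetic point: your intermediate rearrangement $\Gamma(m+q)=\bigl(\Gamma(m+1+q)-q\,\Gamma(m+q)\bigr)/m$ is garbled as written and unnecessary; the clean identity you state right after it is all you need.
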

\begin{proof}
By using Gautchi inequality \cite{G,K}
$$
\frac{1}{(x+1)^{1-q}}\le\frac{\Gamma(x+q)}{\Gamma(x+1)}\le \frac{1}{\left(x+\frac{q}{2}\right)^{1-q}},
$$
for any $x\ge0$, one derives
$$
\begin{gathered}
R_q(n)\ge\sum_{j=1}^{n}\frac{1}{(n-j+1)^{1-q}}=\sum_{j=1}^{n}\frac{1}{j^{1-q}}\\
\ge \int_1^{n+1}\frac{dx}{x^{1-q}}=\frac{1}{q}((n+1)^q-1)\ge\frac{1}{q}(n^q-1),
\end{gathered}
$$
and
$$
\begin{gathered}
R_q(n)\le\sum_{j=1}^{n}\frac{1}{\left(n-j+\frac{q}{2}\right)^{1-q}}
=\left(\frac{q}{2}\right)^{q-1}+\sum_{j=1}^{n-1}\frac{1}{\left(j+\frac{q}{2}\right)^{1-q}}\\
\le \left(\frac{q}{2}\right)^{q-1}+\int_0^{n-1}\frac{dx}{\left(x+\frac{q}{2}\right)^{1-q}}=\frac{1}{2}\left(\frac{q}{2}\right)^{q-1}
+\frac{1}{q}\left(n-1+\frac{q}{2}\right)^q\le \frac{1}{q}\left(n^q+1\right).
\end{gathered}
$$
Thus
\begin{equation*}
\left|R_q(n)-\frac{n^q}{q}\right|\le\frac{1}{q},\quad \forall n\ge1.
\end{equation*}
\end{proof}
Relation \eqref{e3} means that the terms of $(R_q(n))$ and $(n^q)$ grow similarly. Therefore, small errors in each steps may lead to large final errors.
In Fig. \ref{fig1} is presented the evolution of $R_q(n)$ for $n\le1500$ and $q=k0.1$, $k=1,2,...,10$\footnote{A possible way to extend the range of $n$ in the numerical determination of $R_q(n)$ in Matlab, is to use the following relation: $\mathtt{\Gamma(n-j+q)/\Gamma(n-j+1)=exp(gammaln(n-j+q)-gammaln(n-j+q))}$, where $\mathtt{gammaln}$ is the logarithm of the gamma function.}.

\begin{remark} \leavevmode
\begin{itemize}[leftmargin=0.2in,noitemsep,topsep=0pt]
\item [i)]
Denote by $\Phi(t,x_0)$, a solution of \eqref{e1}. Because, due to the memory history of solutions, $\Phi$ does not verify the relation $\Phi(t)\circ\Phi(s)=\Phi(t+s)$, one cannot consider \eqref{e1} as defining a dynamical system. On the other side, motivated by the numerical calculation of the solutions utilized in this paper, the definition of a dynamical system of integer order modeled by a numerical scheme \cite[Definition 2.1.2]{x2} is adopted. Therefore, because \eqref{d0} admits the solutions \eqref{d1}, one says the problem defines a dynamical system of FO.
\item[ii)] Formula \eqref{e2} is valid also for the following FO discrete equations
$$
\nabla_*^q x(k+1)=f(x(k)),\quad k=0,1,\cdots,\quad x(0)=x_0,
$$
where $\nabla_*^q x(k)$ is the Caputo nabla fractional difference (see \cite{sta1}).
\end{itemize}
\end{remark}

\section{Chaos control of the logistic map of FO}

The chaos control algorithm \eqref{asta} can be expressed as follows:
\begin{equation}\label{eee1}
x(n+1)=\begin{cases} x_0+\frac{1}{\Gamma(q)}\sum_{j=1}^{n+1}\frac{\Gamma(n-j+q)}{\Gamma(n-j+1)}f(x(j-1)) &\textrm{if } \mod(n,\Delta)\ne0,\\
                   (1+\gamma)(x_0+\frac{1}{\Gamma(q)}\sum_{j=1}^{n+1}\frac{\Gamma(n-j+q)}{\Gamma(n-j+1)}f(x(j-1))) &\textrm{if } \mod(n,\Delta)=0,
     \end{cases}
\end{equation}
where $f(x(n))$ is the logistic map.

The sequence $(x(n))$ is bounded, as proved by the following result

\begin{theorem}
For \eqref{eee1} with $\gamma+1>0$, it holds
$$
x(n)\le\max\Bigg\{x_0+\frac{((n+1)^q+1)\mu}{4q},(1+\gamma)\left(x_0+\frac{((n+1)^q+1)\mu}{4q}\right)\Bigg\}\quad\forall n\in\N.
$$

\end{theorem}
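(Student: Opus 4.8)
The plan is to reduce everything to the uniform upper bound on the logistic nonlinearity together with the growth estimate for $R_q$ already established inside the proof of Proposition~\ref{propo}, so that no induction on $n$ and no invariant‑region argument are needed. The starting observation is that for $\mu>0$ one has, for \emph{every} real $x$,
\begin{equation*}
f(x)=\mu x(1-x)=\frac{\mu}{4}-\mu\Bigl(x-\tfrac12\Bigr)^2\le\frac{\mu}{4},
\end{equation*}
hence $f(x(j-1))\le\mu/4$ holds along the orbit unconditionally, with no prior knowledge that the iterates stay in $[0,1]$. Moreover, for $q\in(0,1)$ and $1\le j\le n+1$ every weight $\Gamma(n+1-j+q)/\Gamma(n+1-j+1)$ is $\ge0$, since the gamma function is positive at each of the (positive) arguments occurring. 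These two facts are exactly what makes a coefficient‑wise estimate legitimate.

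First I would treat the non‑impulsive line of \eqref{eee1}. Replacing each $f(x(j-1))$ by $\mu/4$ and factoring it out gives
\begin{equation*}
x(n+1)\le x_0+\frac{\mu}{4\,\Gamma(q)}\sum_{j=1}^{n+1}\frac{\Gamma(n+1-j+q)}{\Gamma(n+1-j+1)}.
\end{equation*}
Re‑indexing by $k=n+1-j$ identifies the sum with $R_q(n+1)=\sum_{k=0}^{n}\Gamma(k+q)/\Gamma(k+1)$ (reading the indices of \eqref{eee1} literally instead produces $R_q(n)\le R_q(n+1)$, which only weakens the bound, so it is harmless). The Gautschi‑type chain carried out in the proof of Proposition~\ref{propo} yields $R_q(n+1)\le\frac1q\bigl((n+1)^q+1\bigr)$, and since $\Gamma$ is decreasing on $(0,1]$ with $\Gamma(1)=1$ we have $\Gamma(q)\ge1$, i.e. $1/\Gamma(q)\le1$. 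Combining,
\begin{equation*}
x(n+1)\le x_0+\frac{\bigl((n+1)^q+1\bigr)\mu}{4q}.
\end{equation*}

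For the impulsive line the bracketed quantity is precisely the non‑impulsive value of $x(n+1)$, so it is already bounded by $x_0+((n+1)^q+1)\mu/(4q)$; because $1+\gamma>0$, multiplying that inequality through by $1+\gamma$ preserves its direction and gives $x(n+1)\le(1+\gamma)\bigl(x_0+((n+1)^q+1)\mu/(4q)\bigr)$. Taking the maximum of the two cases bounds $x(n+1)$ by the asserted right‑hand side for every $n$, and since $x(0)=x_0$ is trivially dominated by either expression the estimate holds for all $n\in\N$.

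I do not expect a genuinely hard step: the statement is a one‑shot a priori estimate rather than a dynamical argument. The care needed is bookkeeping rather than conceptual — matching the summation range of \eqref{eee1} with the definition of $R_q(n)$, carrying along the innocuous factor $1/\Gamma(q)\le1$, and invoking the sign hypotheses (positivity of the weights and $1+\gamma>0$) to justify the termwise comparison and the multiplication by $1+\gamma$. The index matching with $R_q$ is the part I would be most careful to get right in the write‑up.
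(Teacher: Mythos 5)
Your argument is correct and follows essentially the same route as the paper's (very terse) proof: bound $f(x)\le\mu/4$ pointwise, use the nonnegativity of the Gamma-ratio weights together with the estimate $R_q(n)\le\frac1q(n^q+1)$ from Proposition~\ref{propo}, and multiply by $1+\gamma>0$ in the impulsive case. Your write-up is in fact more careful than the paper's, making explicit the index bookkeeping and the step $1/\Gamma(q)\le1$ that the paper leaves implicit.
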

\begin{proof}
It is clear that $f(x)\le \frac{\mu}{4}$ for any $x\in\R$. Then we get
$$
\begin{gathered}
x(n+1)\le x_0+\frac{((n+1)^q+1)\mu}{4q},\quad \mod(n,\Delta)\ne0,\\
x(n+1)\le (1+\gamma)\left(x_0+\frac{((n+1)^q+1)\mu}{4q}\right),\quad \mod(n,\Delta)=0.
\end{gathered}
$$
\end{proof}

To implement the impulsive algorithm \eqref{eee1} and visualize the effect of impulses $\gamma$, one draws the bifurcation diagram of the impulsed system versus $\gamma\in[-0.15,0.15]$. The tools utilized to analyze the regularity obtained with the impulsive control are the LE, time series, the `0-1' test (Appendix) with $p$, $q$, the mean square displacement $M$ and the median $K$ value, and also histograms. First transients have been removed. For this system, the errors of $K$ are about $1e-3$ for the `0' value, and $1e-4$ for `1'.

Note that if $\gamma>0$, the system suffers a positive variation of internal energy, and the receives energy, because at the moment $\Delta$, when the impulse is applied, the new value of $x_{n+1}$, $(1+\gamma)x_{n+1}$, is bigger than the previous value \cite{doi}. The received energy is used to maintain the orbit on a regular motion. Similarly, if $\gamma<0$, the system is forced to dissipates energy to reach the necessary one along the regular orbit. However, these phenomena should be interpreted under the mentioned hystory memory effect.

The bifurcation diagram of the FO logistic map \eqref{d0} versus $\mu\in[1.8,2.8]$ with $q=0.9$ is presented in Fig. \ref{fig2} (a), while in Fig. \ref{fig2} (b) is presented the bifurcation plot versus $q\in[0.01,1]$ with $\mu=2.5$.
All bifurcation diagrams in this paper are overplotted with LE exponent (red plot) and $K$ median value (blue plot).

\begin{remark}\leavevmode\label{ics}
\begin{itemize}[leftmargin=0.2in,noitemsep,topsep=0pt]
\item[i)]
As known, because the memory effect, general differential equations of FO cannot have nonconstant periodic solutions (see e.g. \cite{inex}). Similarly this happens with discrete difference equations of FO\cite{mich}. Therefore, these orbits are called ``numerically stable periodic'' (NSP), in the sense that the trajectory, from numerical point of view, can be an extremely near periodic trajectory \cite{ooo}.
\item[ii)] Even the bifurcation scenario versus $\mu$ (Fig. \ref{fig2} (a)) looks similar to the case of integer-order logistic map, the Feigenbaum parameter sequence is different. Also, over $\mu>2.8$, depending on $q$, orbits may diverge (grey zones in Figs. \ref{fig5} and \ref{fig6}).
\item[iii)] As the detail in the bifurcation diagram of $x$ versus $q$ in Fig. \ref{fig2} (b) shows, the system does not present the typical periodic direct and reverse period doubling bifurcations (dotted red lines), as for the integer order counterpart (compare also with \cite[Figs. 4-7]{bal2} and \cite[Fig. 1]{pupu1}). Because at this points the LE is positive and not zero as in the case of the logistic map of integer order, but  $K$ is approximately $0$ like for regular motion, it is difficult to characterize this kind of dynamics. Similarly, but less visible, it happens in the bifurcation diagram versus $\mu$ (Fig. \ref{fig2} (a)).
\end{itemize}
\end{remark}

\noindent For the next numerical experiments $q=0.9$ and $\mu=2.5$ (Fig. \ref{fig2} (a)), when LE=0.1744 and $K=0.9538$ fact which indicates a chaotic behavior.

\noindent Next, several values of moments $\Delta$ are considered.
\begin{enumerate}[topsep=0pt]

\item $\Delta=1$. In this case impulses applied every step (Figs. \ref{fig3}). As can be seen, there are several NSP windows, corresponding to negative but also positive ranges of $\gamma$, from which each value $\gamma$ leads to NSP orbits.
Note the points corresponding to $\gamma_1$, $\gamma_2$ and $\gamma_3$, where LE$>0$ and $K\approx 0$ (see Remark \ref{ics} (iii)).
Two representative values are considered.
\begin{itemize}[leftmargin=*,labelindent=10pt,itemindent=20pt]
\item [-]
For $\gamma=-0.05$ (dotted line in Fig. \ref{fig3} (a)), with the impulsive control applied every step, one obtains a NSP orbit of period-$4$, as revealed by the time series (Fig. \ref{fig3} (b)) and histogram (Fig. \ref{fig3} (c)). The regularity of this NSP orbit is underlined by the circular plot of $p$ and $q$ (Fig. \ref{fig3} (d) and the bounded mean square displacement $M$ (Fig. \ref{fig3} (e). In this case LE=-0.0061 and $K=-0.0055$;
\item [-]
For $\gamma=0.059$ (dotted line in Fig. \ref{fig3} (a)), the impulsive control leads to a NSP of period-$5$ (see time series and historgram in Figs. \ref{fig4} (a) and (b) respectively). The regularity is underlined by the shape of the plot of $p$ and $q$ (Fig. \ref{fig4} (c)) and also by $M$ (Fig. \ref{fig4} (d)). LE=0.0471 and $K=-0.0076$.
\end{itemize}
\item The case $\Delta=3$ is presented in Figs. \ref{fig5}, where a NSP orbit of period-$12$ is obtained with $\gamma=-0.031$. The period can be remarked in the time series and histogram in Figs. \ref{fig5} (a) and (b), respectively, while the graphs of $p$ and $q$ (Fig. \ref{fig5} (c)) and the bounded mean square displacement $M$ show that the orbit is regular. LE=-0.0033 and $K=-0.0056$.
\item $\Delta_{max}$ for which chaos still can be controlled, for $\gamma\in[-0.15,0.15]$, is $\Delta_{max}=15$. Thus, for $\gamma=-0.12$ within a narrow periodic window $D$ (Fig. \ref{fig6} (a)), one obtains the NSP orbit of period-15 (Fig. \ref{fig6} (b)).

\item For $\Delta$ larger than $16$ and $\gamma\in[-0.15,0.15]$, chaos cannot be controlled. This is illustrated by the bifurcation diagram for $\Delta=17$, and the positive LE and values of $K$ close to 1 (Fig. \ref{fig6} (b)).
\end{enumerate}

\section*{Conclusion}
In this paper, the impulsive chaos control of the discrete logistic map of fractional order \eqref{eee1} has been investigated. The discrete logistic map of fractional order has been proposed by Wu and Baleanu in \cite{bal2} in terms of Caputo delta fractional difference.
The impulsive control,previously used in integer order continuous and discrete systems, is obtained by perturbing periodically (every $\Delta$ steps) the state variable with a constant impulse: $x_{n+1}\leftarrow (1+\gamma)x_{n+1}$, where $\gamma$ is a relatively small real number. If, for a chosen $\Delta$, the control algorithm is applied for a $\gamma$ value which generates in the bifurcation diagram versus $\gamma$ a chaotic behavior, regular motions can be obtained. Several numerical cases are considered.

It is proved that the impulsed orbits remain bounded.

To verify the obtained results, time series, histograms and the `0-1' test is utilized. Because of the discrete memory effect, the Lyapunov exponent is obtained by linearization of the discrete integral of the initial value problem of fractional order. Note that the numerical implementation of the discrete integral of the underlying initial value of FO requires numerical precaution. Otherwise only few terms of iterations can be calculated.

\textbf{Acknowledgement} N.K and M.F.D. are supported by the Russian Science Foundation 19-41-02002 and M.F. is supported by the Grants Slovak Research and Development Agency under the contract No. APVV-18-0308 and by the Slovak Grant Agency VEGA-SAV  No. 2/0153/16 and No. 1/0078/17.

Conflict of Interest: The authors declare that they have no conflict of interest.

\newpage
\section*{Appendix}

\subsection*{The `0-1' test}\label{a}
The '0-1'  test has its roots in \cite{uu} beening developed in \cite{got2} (see also \cite{got1} or \cite{uu2}). It is designed to distinguish chaotic behavior from regular behavior in deterministic systems.
The input being a time series, the test is easy to implement and does not need the system equations.
Consider a discrete or continuous-time dynamical system and a one-dimensional observable data set, constructed from a time series, $\phi(j)$, $j=1,2,...,N$, with $N$ some positive integer. The `0-1' test bases on a theorem, which states that a nonchaotic motion is bounded, while a chaotic dynamic behaves like a Brownian motion \cite{uu}.

\noindent 1) First, for $c\in[0,2\pi]$, one compute the translation variables $p$ and $q$ \cite{got2}

\[
p(n)=\sum_{j=1}^n\phi(j)\cos(jc),~~~ q(n)=\sum_{j=1}^n\phi(j)\sin(jc),
\]
for $n=1,2,...,N$. The choice of $c$ represents an important and sensible algorithm variable (see for example \cite {got1} where for $c$, the interval $[\pi/5,4\pi/5]$ is proposed).

\noindent 2) To determine the growths of $p$ and $q$, the mean-square displacement $M$ is determined:
\[
M(n)=\lim_{N\rightarrow \infty}\frac{1}{N}\sum_{j=1}^N[p(j+n)-p(j)]^2+[q(j+n)-q(j)]^2.
\]
where $n\ll N$ (in practice, $n=N/10$ represents a good choice).

\noindent 3) Next, the asymptotic growth rate $K$ is defined as
\[
K=\lim_{n\rightarrow \infty}\log M(n)/\log n.
\]
If the underlying dynamics is regular (i.e. periodic or quasiperiodic) then $K \approx 0$; if the underlying dynamics is chaotic then $K \approx 1$.

In Fig. \ref{fig7} the case of the logistic map of integer-order is presented. In Figs. \ref{fig7} (a) are presented the plots of $q$ and $p$ while in Figs. \ref{fig7} (b) the mean-square displacement $M$ as a function of $n$. In Figs. (i) the regular orbit of the logistic map $x_{n+1}=\mu x_n(1-x_n)$ for $\mu=3.55$ while Figs. (ii) present the chaotic orbit of the logistic map for $\mu=4$.

\begin{figure}
\begin{center}
\includegraphics[scale=1]{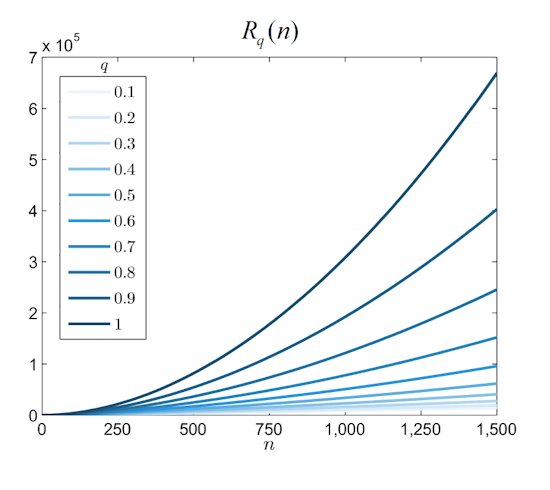}
\caption{Variation of $R_q(n)$, for $n\in[0,1500]$, and $q\in\{0.1,0.2,...,1\}$. }
\label{fig1}
\end{center}
\end{figure}

\begin{figure}
\begin{center}
\includegraphics[scale=1]{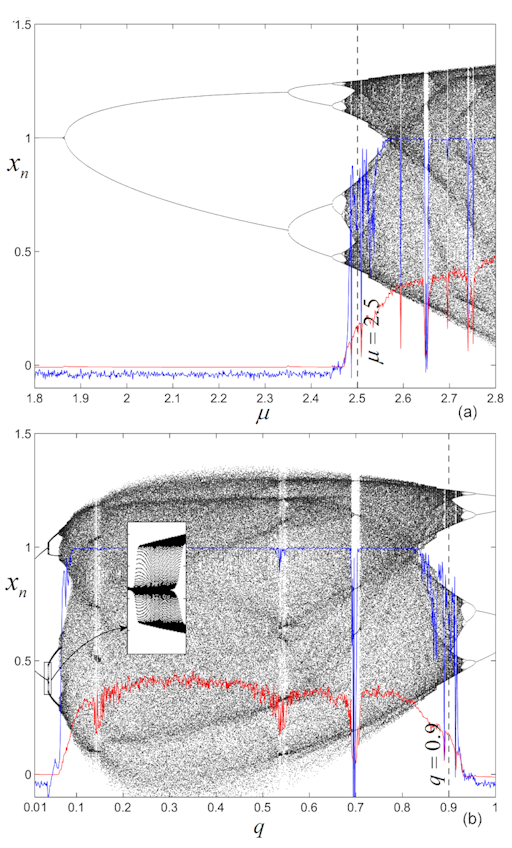}
\caption{Bifurcation diagrams of the FO logistic map \eqref{d1}. (a) Bifurcation diagram versus $\mu\in[1.8,2.8]$ with $q=0.9$; (b) Bifurcation diagram versus the FO $q\in[0.01,1]$ with $\mu=2.5$.}
\label{fig2}
\end{center}
\end{figure}

\begin{figure}
\begin{center}
\includegraphics[scale=1]{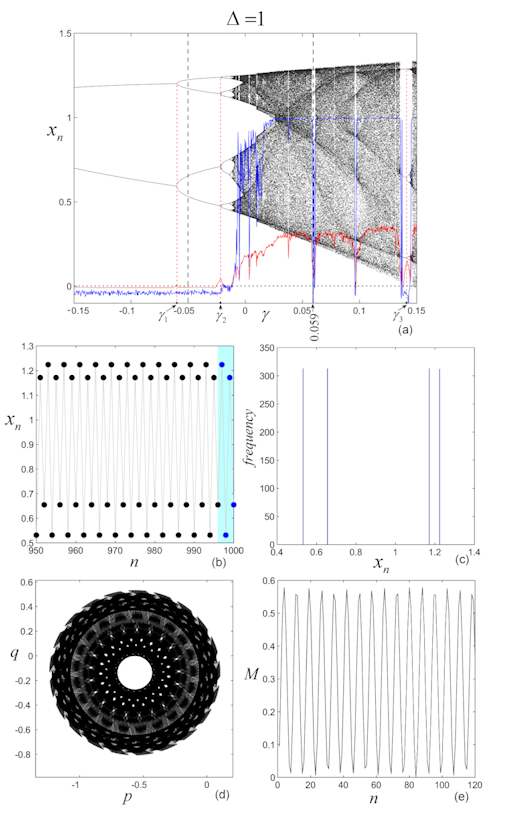}
\caption{NSP period-$4$ orbit obtained with the algorithm \eqref{eee1} applied every step ($\Delta=1$), for $\gamma=-0.05$; (a) Bifurcation diagram versus $\gamma\in[-0.15,0.15]$. Two values of $\gamma$ for which the system behaves regularly have been chosen (dotted lines): $\gamma=-0.05$ and $\gamma=0.059$; (b) Time series for $\gamma=-0.05$ indicates a NSP orbit of period-$4$; (c) Histogram for $\gamma=-0.05$ with $4$ bars which indicates a NSP orbit of period-$4$ ; (d) $q$ and $p$ plot for $\gamma=-0.05$; (e) mean square displacement $M$. }
\label{fig3}
\end{center}
\end{figure}

\begin{figure}
\begin{center}
\includegraphics[scale=1]{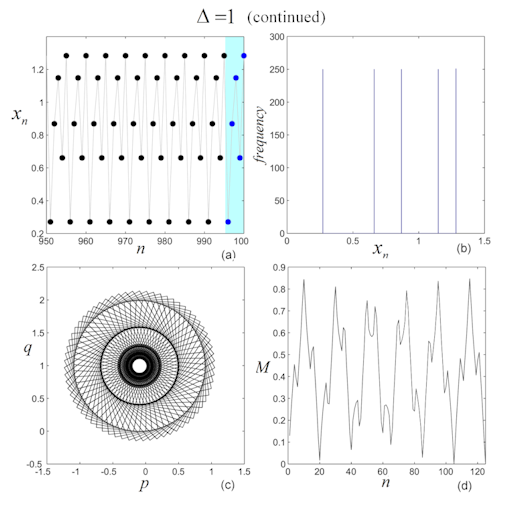}
\caption{A period-$5$ NSP orbit, obtained with the algorithm \eqref{eee1}, with $\Delta=1$, but for $\gamma$ chosen within other stable window: $\gamma=0.059$ (see \ref{fig3} (a)). (a) Time series indicates a NSP orbit of period-$5$; (b) Histogram with $5$ bars which indicates a NSP orbit of period-$5$; (c) $q$ and $p$ plot; (d) mean square displacement $M$.}
\label{fig4}
\end{center}
\end{figure}

\begin{figure}
\begin{center}
\includegraphics[scale=1]{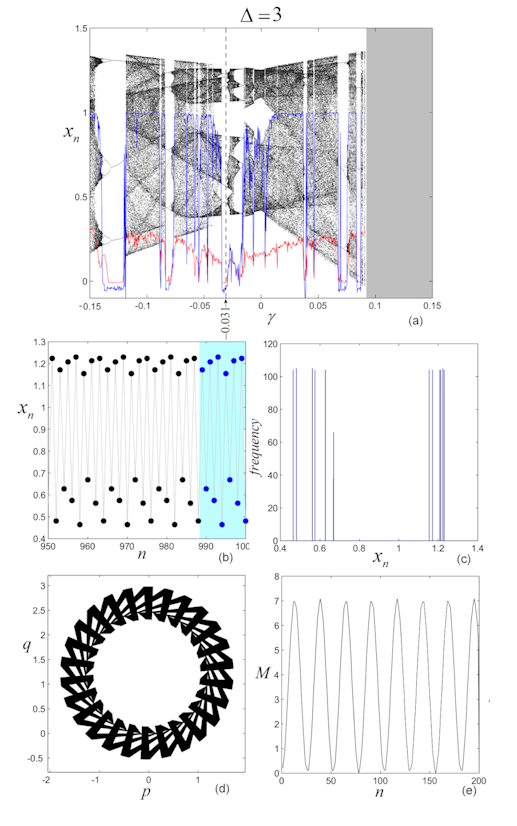}
\caption{NSP orbit of period-$12$ obtained with the impulsive chaos control \eqref{eee1}, applied every three step ($\Delta=3$) and $\gamma=-0.031$ (dotted line in the bifurcation diagram) (a) Bifurcation diagram versus $\gamma\in[-0.15,0.15]$; (b) Time series indicating a NSP orbit of period-$12$; (c) Histogram with $12$ bars which indicates a NSP orbit of period-$12$; (d) $q$ and $p$ plot; (e) mean square displacement $M$.}

\label{fig5}
\end{center}
\end{figure}

\begin{figure}
\begin{center}
\includegraphics[scale=1]{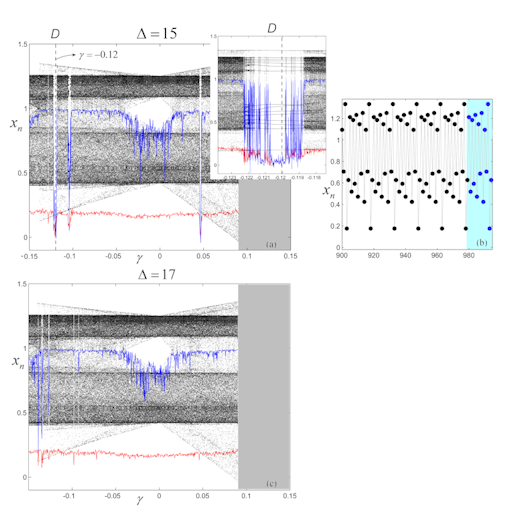}
\caption{(a) NSP orbit of period-$15$ obtained with the algorithm \eqref{eee1} applied $\Delta=1$ steps and $\gamma=-0.12$ (dotted line); $D$ represents a zoomed zone of the  chosen $\gamma$; (b) Time series indicating a NSP orbit of period-$15$; (c) Bifurcation diagram for $\Delta=17$. LE has positive values, and $K$ values close to 1.}
\label{fig6}
\end{center}
\end{figure}

\begin{figure}
\begin{center}
\includegraphics[scale=1]{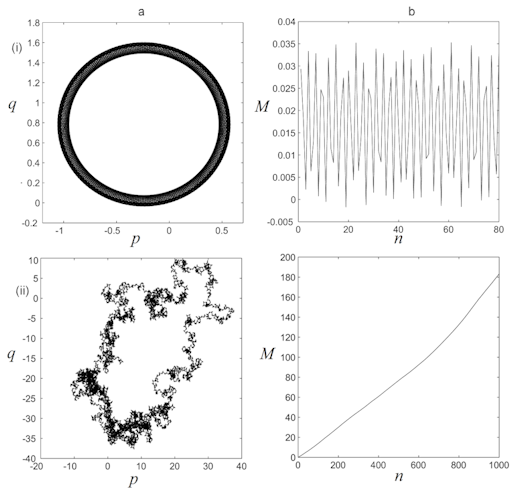}
\caption{The `0-1' test applied to the logistic map of integer order $x_{n+1}=\mu x_n(1-x_n)$; (i) Periodic motion for $\mu=3.5$; (ii) Chaotic motion for $\mu=4$; a: Plot of $q$ and $p$; b: The mean square displacement $M$.}
\label{fig7}
\end{center}
\end{figure}

The authors declare that they have no conflict of interest.

\newpage{\pagestyle{empty}\cleardoublepage}


\end{document}